\let\underbrace\LaTeXunderbrace
\DeclareMathAlphabet{\pazocal}{OMS}{zplm}{m}{n}
\newtheorem{thrm}{Theorem}
\newtheorem{rem}{Remark}
\renewcommand{\a}{\bm{a}}
\newcommand{\e}{\bm{e}}
\renewcommand{\u}{\bm{u}}
\newcommand{\n}{\bm{n}}
\newcommand{\zero}{\bm{0}}
\newcommand{\indic}{\mathds{1}}
\renewcommand{\S}{\pazocal{S}}
\newcommand{\F}{\pazocal{F}}
\newcommand{\A}{\pazocal{A}}
\newcommand{\bR}{\mathbb{R}}
\newcommand{\bP}{\mathbb{P}}
\renewcommand{\b}{{\bf{b}}}
\newcommand{\bE}{\mathbb{E}}
\newcommand{\D}{\pazocal{D}}
\newcommand{\tR}{{\tilde{R}}}
\newcommand{\baR}{{\bar{R}}}
\newcommand{\T}{\pazocal{T}}
\newcommand{\ts}{\tilde{s}}
\newcommand{\ns}{\tilde{\n}}
\newcommand{\tV}{\widetilde{V}}
\newcommand{\tm}{\tilde{m}}
\def\BibTeX{{\rm B\kern-.05em{\sc i\kern-.025em b}\kern-.08em
    T\kern-.1667em\lower.7ex\hbox{E}\kern-.125emX}}
\begin{document}

\title{Post-Decision State-Based Online Learning for\\ Delay-Energy-Aware Flow Allocation in Wireless Systems\\
\thanks{M. G. Bhat and S. Moothedath are with Electrical and Computer Engineering, Iowa State University. Email: \{mgbhat, mshana\}@iastate.edu. P. Chaporkar is with Electrical Engineering, Indian Institute of Technology Bombay. Email: chaporkar@ee.iitb.ac.in. This work is supported by NSF-CNS 2415213, USA and MeitY, India.}}

\author{Mahesh Ganesh Bhat, Shana Moothedath,~\IEEEmembership{Senior Member,~IEEE,} and Prasanna Chaporkar}

\maketitle
\begin{abstract}
We develop a structure-aware reinforcement learning (RL) approach for delay- and energy-aware flow allocation in 5G User Plane Functions (UPFs). We consider a dynamic system with $K$ heterogeneous UPFs of varying capacities that handle stochastic arrivals of $M$ flow types, each with distinct rate requirements. 
We model the system as a Markov decision process (MDP) to capture the stochastic nature of flow arrivals and departures (possibly unknown), as well as the impact of flow allocation in the system. To solve this problem, we propose a post-decision state (PDS) based value iteration algorithm that exploits the underlying structure of the MDP. By separating action-controlled dynamics from exogenous factors, PDS enables faster convergence and efficient adaptive flow allocation, even in the absence of statistical knowledge about exogenous variables.
Simulation results demonstrate that the proposed method converges faster and achieves lower long-term cost than standard Q-learning, highlighting the effectiveness of PDS-based RL for resource allocation in wireless networks.
\end{abstract}
 \begin{IEEEkeywords}
  Resource allocation, UPF flow allocation, Reinforcement learning, Energy-aware decision-making.
 \end{IEEEkeywords}

\section{Introduction}
The evolution of modern 5G and beyond networks (B5G) has enabled a wide range of applications with diverse service requirements, including high-throughput broadband, latency-critical control applications, and large-scale IoT connectivity. This has led to increasing network traffic with heterogeneous and stringent quality of service (QoS) requirements. The rapid growth of infrastructure to support large amounts of traffic has led to an increase in the energy footprint. As a result, sustainability is a primary objective in the design of modern networks. 
The recommendation ITU-R M.2160 identified sustainability as one of the highlights in the clauses of ``\textit{Motivation and societal considerations}" and ``\textit{User and application trends}".
This motivates energy-aware resource management and greener network deployments. With this objective in focus, while extensive work has addressed resource allocation and scheduling in Radio Access Networks (RAN)~\cite{11160834, 11161043}, it is important to consider the same in the User Plane Function (UPF), which plays a pivotal role in the 5G core.

The UPF acts as a central data-plane interconnection point within the 5G core, responsible for data forwarding, traffic routing, and enforcing QoS policies. Traffic flow allocation to UPFs directly affects network performance and reliability. Furthermore, UPFs are virtual functions deployed on a variety of infrastructures and vary in characteristics such as geographical location, computational capacity, user proximity, and energy profiles (the amount of operational energy contribution from green energy). These factors influence the selection of the UPF and underscore the need for a principled approach for flow allocation, rather than a simple data forwarding approach.

Designing intelligent and adaptive flow allocation strategies is essential for achieving scalable, delay-sensitive, and resource-efficient performance in next-generation mobile networks \cite{wang2024delay}.
The flow allocation problem shares characteristics with the task scheduling problem and has been extensively studied in computing and communication systems. Earlier works primarily addressed static (optimization version) allocation problems \cite{BRAUN20081504, han2008resource}, limiting adaptability in dynamic and uncertain environments.
%
%
Reinforcement learning (RL) presents a promising framework for addressing the challenge of flow allocation by enabling data-driven, adaptive decision making in dynamic and uncertain network environments \cite{nasir2019multi, tang2020deep}. Unlike traditional methods, RL can learn optimal policies through interaction with the system, accounting for stochastic arrivals, departures, and system constraints, making it particularly well-suited for modern, delay-sensitive and resource-constrained applications in next-generation networks.
For instance, \cite{9078843} proposed a hybrid approach using RL and stochastic gradient descent for Mobile Edge Computing (MEC) offload scheduling with random task arrivals. 
However, it considered only a single edge cloud.
A Deep Reinforcement Learning (DRL) approach is proposed for delay-resource-aware service optimization in satellite-deployed UPFs \cite{wang2024delay} but they consider a fixed set of arrivals.
DRL based resource allocation schemes are proposed for minimizing the delay in MEC in~\cite{8657791,8798668}. DRL-based approaches face two major limitations: they learn approximate rather than optimal policies and lack formal guarantees. Additionally, they depend on function approximations and typically require large amounts of training data. Further, in all of these studies, the task arrivals and departures are assumed to be fixed and known. However, the task arrivals and departures are often exogenous and unknown. 
Our goal in this work is to leverage the problem structure and develop efficient and optimal RL algorithm that offers fast and sample efficient learning.

The inherent structural properties of allocation and scheduling problems enable post-decision state (PDS)-based RL analysis. PDS analysis has been applied in scheduling and allocation problems. PDS has been studied in the context of delay-sensitive wireless transmission scheduling for energy-efficient point-to-point communication and accelerated learning~\cite{5986747, 6093953}. A PDS-based online learning approach for server allocation in data centers was introduced in \cite{7491319}, aiming to reduce electricity costs. A wireless resource scheduling in virtualised RAN networks with random arrivals and departures focusing on mobile device utility is considered in \cite{8014506}, where a multi-agent MDP is decomposed into single-agent MDPs, and a PDS-based online localized algorithm is proposed. RL and DRL-based privacy-aware offloading methods for IoT applications using PDS are studied in~\cite{8515032, 8491311}.

While several works address the flow scheduling problem in the context of wireless mobile network, they rely on the extensive training data, known system dynamics or fixed traffic flows. In contrast to these works, we focus on modeling the multi-flow dynamic allocation, under unknown flow arrivals and departures. This paper develops an RL-based framework for dynamic flow allocation in 5G networks with multiple UPFs and heterogeneous traffic classes. By formulating the problem as a Markov Decision Process (MDP), we capture the stochastic nature of the network traffic and propose a novel RL algorithm to learn optimal allocation policies when transition probabilities of MDP are not known. We aim to leverage the structural properties to learn the optimal policy using PDS-based RL to solve delay and energy-aware flow allocation, while considering the capacity constraints. The proposed approach enables real-time, adaptive decision-making, addressing the limitations of static or model-based techniques in highly dynamic environments to optimize performance. To the best of our knowledge, this problem in a multi-server scenario with stochastic (unknown) arrivals and departures has not been previously studied.
%

The key contributions of this work are threefold.
\begin{enumerate}
    \item[$\bullet$] We formulate the multi-flow dynamic allocation problem as a delay- and energy-aware MDP, capturing the stochastic traffic arrivals and departures under fixed capacity constraints to enable adaptive and principled allocation.
    \item[$\bullet$] We propose a post-decision state (PDS)-based value iteration (VI) algorithm that leverages the problem structure, specifically, the separation between exogenous and controllable dynamics, to simplify learning, enhance sample efficiency, and accelerate convergence. We provide a convergence guarantee of the proposed PDS-VI algorithm.
    \item[$\bullet$] We evaluated the performance of the proposed approach via numerical simulations and demonstrate faster convergence and efficient cost performance compared against a conventional baseline.
\end{enumerate}

\section{Preliminaries: Markov Decision Process}\label{sec:not}
%
Markov Decision Process (MDP) is the standard framework for modeling a
stochastic dynamical system and  computing its optimal control policy~\cite{sutton2018reinforcement}. 
 Let $\S$ and $\A$ be compact  sets describing the states and actions of the controller (agent), respectively, and $r:\S \times \A \rightarrow \bR$ be the reward function.  The system dynamics is characterized by the probability transition structure $\bP$, where $\bP(s'|s, a)$ is the probability of transitioning to state $s'$ from state $s$ under control action $a$. A  policy $\pi: \S \rightarrow \A$ is a conditional distribution $\pi(a|s)$ that guides the decision-making process of an agent. At time $t \in \{0,1,\ldots,\}$,  the agent  observes the current state $s_t$ and chooses an action $a^{\pi}_t$ from the policy $\pi(a|s)$, and observes the reward $r(s_t,  a^{\pi}_t)$.  The action chosen by the agent at a state drives the agent to a next state with probability (w.p.) $\bP(s_{t+1}|s_t, a^{\pi}_t)$. 
 The agent's goal is to find an optimal policy $\pi^\star$ that maximizes the cumulative reward $\lim_{T \rightarrow \infty} \sum_{t=0}^{T}\gamma^t r(s_t, a^{\pi}_t)$, where $\gamma \in (0, 1]$ is the discount factor that captures how myopic the agent is.

\section{Problem Formulation: Delay-Sensitive Flow Allocation in 5G User Plane Functions}

\subsection{System Model}
Consider a 5G network. The time is slotted. 
At the beginning of each slot, a new flow arrives in the network w.p.~$p$.
Each flow arriving can be of one of $M$ types. Let $\b_m$ denote the probability that an arriving flow is of type $m$ for $m\in [M]$, where $[Z]:=\{1,2,\ldots, Z\}$ for any integer $Z$.
We also assume that the flow arrivals and its type are independent across flows.
The flow type indicates the average flow rate requirement.
Let $\baR_m$ denote the average rate requirement for the flow of type $m$. 
Without loss of generality, 
$\baR_m < \baR_{m+1}$ for every $m \in [M-1]$.
For each flow that arrives, the network must decide whether to accept the flow. Specifically, if the required rate cannot be guaranteed, then the admission is denied to the flow; otherwise it is accepted. 

If a flow is accepted in the network, then it is assigned a UPF that handles the flow until it departs and provides it the required rate based on its type. 
We assume that there are $K$ UPFs in the network. Each UPF may have distinct capabilities in terms of the available memory, computational and switching speeds. Based on the capabilities, let $C_k$ denote the maximum data rate (in bits/second) that UPF $k \in [K]$ can support.
Finally, at most one flow departs the $k^{\rm th}$ UPF w.p. $q_k$, for $k \in [K]$,  at the end of the slot, independently of flow departures in other slots.
The departing flow is equally likely to be any existing flow in the UPF.
A flow arriving in a slot can depart in the same slot.
We now formulate the dynamic UPF allocation problem as an MDP. 

\subsection{Markov Decision Process Formulation}
Flow allocation in 5G networks is inherently a dynamic decision-making problem. MDP offers a principled framework for capturing such dynamics and optimizing long-term performance. 
In this section, we model the delay- and energy-sensitive flow allocation problem in UPFs as an MDP and describe its key components in detail.

\subsubsection{State Space}
We define the state of the system as a tuple containing the allocation matrix $\n$ and the flow arrival state $f$, i.e., $s=(\n, f)$. 
For a state $s=(\n, f)$, let $\n$ be a $K\times M$ matrix such that the $(k,m)^{\rm th}$ entry $\n_{km}$ 
denotes the number of type $m$ flows handled by $k^{\rm th}$ UPF at $s$ and $f \in \{0,1,,\ldots, M\}$ denotes the type of flow arrived at $s$. Here, $f=0$ means there is no flow arrival at $s$. 
Let $\tR_k(s)$ denote the total average rate that 
UPF $k$ needs to support in $s$.
Note that
\begin{align}
    \tR_k(s) = \sum_{m\in [M]} \n_{km} \baR_m.
\end{align}
Let us define the state space $\S$ as
\begin{align*}
   \S &= \left\{s: C_k > \tR_k(s) \mbox{~for~all~$k \in [K]$}, f\in \{0,\ldots, M\}\right\}.
\end{align*}
Note that $\lfloor C_k/\baR_1\rfloor$ is the maximum number of flows that UPF $k$ can support at any
given time.
Thus, $\S$ is a finite set. 

\subsubsection{Action Space and Control Policy}
    Consider a state $s=(\n,f)$. 
    Let $\A(s)$ be the set of feasible actions in $s$. 
    Then $\A(s)\subseteq \{0,1,\ldots, K\}$, where $k\in \A(s)$ if $C_k > \tR_k(s) + \baR_f$, for $k \in [K]$.
    For $a\in\A(s)$, $a\in[K]$ indicates flow $f$ is allocated to UPF $a$ and $a=0$ indicates the flow is blocked.
The flow $f$ in $s$ must be admitted as long as $\A(s)\neq \{0\}$ and will be handled by the allocated UPF until it departs.
The action set is of dimension $K+1$. For $a\in\A(s)$, we define a $K \times M$ indicator matrix, $\a$, for analytical purposes. 
\begin{align}\label{eq:ind}
\a_{km}=
\begin{cases}
1, & \text{if } a=k \text{ and }f = m,\\
0, & \text{otherwise}.
\end{cases}
\end{align} 
Thus, $\a$ is a sparse indicator matrix with at most one non-zero entry (equal to $1$), with all remaining entries being $0$s.

    A control policy $\pi: \S \rightarrow \A$ maps the states to feasible action. 
We assume that $\pi$ is causal, i.e., the action chosen for state $s$ may depend on the past states and actions taken, but not on future evolution. Moreover, $\pi$ can also be a randomized policy, i.e., for a state action can be chosen randomly from the set of feasible actions.
\subsubsection{Cost Function}
Let $\xi(s)$ denote the total cost incurred in state $s$.
We consider 
\begin{align}
\vspace{-2 mm}
    \xi(s) = \sum_{k\in [K]} (\alpha_k(s) + \delta_k(s)),
\end{align}
where $\alpha_k(s)$ denotes power cost, and $\delta_k(s)$ denotes the delay cost at UPF $k$ in 
state $s$. These costs are given as
\begin{align}
\vspace{-2 mm}
    \alpha_k(s) &= c_k \tR_k(s) \mbox{~and~} \delta_k(s) = \frac{C_k}{C_k - \tR_k(s)}.
\end{align}
Here, $c_k$ is power cost for switching one bit at UPF $k$. 

\subsubsection{System Dynamics}
The system dynamics is characterized by the probability transition matrix $\bP$, where $\bP(s'|s, \a)$ is the probability of transitioning to the state $s'$ from the state $s$ under control action $\a$. Consider the current state $s=(\n, f)$ and the next state $s'=(\n', f')$, where 
$\n, \n'$ denote the flow allocation matrices and $f, f'$ the flow arrivals. 

The transition dynamics consist of two components: (i) the evolution of the allocation matrix $\n$, which depends on the current state, control action, and stochastic departures, and (ii) the arrival process $f$, which is exogenous and independent of the control action, follows a fixed but unknown probability distribution.
Formally, the transition probability 
\begin{align}
    \bP(s'|s, \a) = \bP(\n'|s, \a) \cdot \bP(f'), \label{eq:prob}
\end{align}
where the arrival probability is given by
\vspace{-2 mm}
\begin{align}
    \bP(f') = (1-p)\indic_{\{f'=0\}} + p\sum_{m=1}^M b_m \indic_{\{f'=m\}}.\label{eq:arr}
\end{align}
The evolution of the allocation matrix $\n$ depends on the departure process. To define $\bP(\n'|s, \a)$, we first define the departure process. Let $\n_k$ and $\n'_{k}$ be the row vectors corresponding to UPF $k$ in the  allocation matrices at the current state $s$ and next state $s'$, respectively. Define the canonical vector in $\mathbb{R}^M$ as
\[
\vspace{-2 mm}
\e_m \in \{0,1\}^M,\qquad
(\e_m)_j =
\begin{cases}
1, & \text{if } j=m,\\
0, & \text{if } j\neq m
\end{cases}
\]
Consider a scenario where a flow of type $m$ departs.
Let $\D_k(\n'_k, \n_k, f)$ be the transition probability of UPF $k$, i.e., probability of transitioning from $\n_k$ to $\n'_k$ under arrival $f$. Let us first consider the case where no new flow arrives in $s$ or flow of type $\tm$ arrives but $\A(s)$ is empty. Then, either $\n=\n'$, or the allocation matrix transitions to a distinct state due to flow departures. Then,
\begin{align}\label{eq:dep}
\vspace{-2 mm}
    \D_k(\n'_k, \n_k, f) &= \begin{cases}
    1-q_k, & \text{if } \n'_k = \n_k \\
    q_k \cdot \frac{\n_{km}}{\sum_{m'}\n_{km'}}, & \text{if } \n'_{k} = \n_{k} - \e_m \\
    1, & \text{if } \n'_k = \n_k = \zero \\
    0, & \text{otherwise.}
    \end{cases}
\end{align}
Now consider a case where a new arrival of type $\tm$ arrives at state $s$ and $\A(s)$ is non-empty. Without loss of generality, let $a=k'$. Two cases arise.
\textbf{Case 1:} For any UPF $k \ne k'$. The departure probability in this case is same as in Eq.~\eqref{eq:dep}.
\textbf{Case 2:} For UPF $k'$. Then,
\begin{align*}
\vspace{-2 mm}
    &\D_{k'}(\n'_{k'}, \n_{k'}, f) = \hspace{-3 mm}
    &\begin{cases}
    1-q_{k'}, & \text{if } \n'_{k'} = \n_{k'} + \e_{\tm} \\
    q_{k'} \cdot \frac{\n_{k'\tm}+1}{\sum_{m'}\n_{k'm'}+1}, & \text{if } \n'_{k'} = \n_{k'} \\
    q_{k'} \cdot \frac{\n_{k'm}}{\sum_{m'}\n_{k'm'}+1}, & \text{if } \n'_{k'} = \n_{k'} + \e_{\tm}\\&- \e_{m}~\text{and } m \neq \tm\\
    0, & \text{otherwise.}
    \end{cases}
\end{align*}
The transition probability
 \begin{align}
 \vspace{-2 mm}
 \bP(\n'|s,\a) &= \prod_{k=1}^K \pazocal{D}_k(\n_k, \n'_k,f). \label{eq:mat-dynamics}
 \end{align}
 %
Eqs.~\eqref{eq:prob}, \eqref{eq:arr}, \eqref{eq:mat-dynamics} complete the modeling of system dynamics.

\subsubsection{Optimal Policy}
Let $s^{\pi}(t)$ be the state in time slot $t$ under policy $\pi$. For any function $g: \S \to \mathbb{R}$, we use the shorthand $g^\pi(t)$ to denote $g(s^\pi(t))$.
For instance, $\tR_k^\pi(t)$ denotes the total rate that UPF $k$ must support at time $t$ under policy $\pi$, i.e., $\tR_k(s^\pi(t))$, and $\xi^\pi(t)$ denotes the cost incurred at time $t$, i.e., $\xi(s^\pi(t))$.
Our goal is to learn optimal policy $\pi^\star$ that minimizes the cumulative discounted reward
\[
\vspace{-2 mm}
\xi^\pi = \lim_{T \to \infty} \sum_{t=0}^{T} \gamma^t \, \xi^\pi(t).
\]

\section{Proposed Structure Leveraging RL Approach}
In practical settings, the arrival probabilities ($p$, $\b_m$) and the departure probabilities ($q_k$) are often unknown, rendering classical dynamic programming approaches inapplicable. To address this, we propose a model-free RL algorithm.

\subsection{Post-Decision State Based Learning}
Consider a state transition from current state $s=(\n, f)$ and the next state $s'=(\n', f')$.
We know from Eq.~\eqref{eq:prob} that the transition dynamics has two components. This decomposition facilitates the design of RL algorithms that can effectively handle partial knowledge of the environment. 

Let $\ts= (\ns, f)$ be a virtual state immediately after taking an action, but before the impact of the stochastic arrivals and departures, which is referred to as the post-decision state.  Here $\tilde{\n}$ is the corresponding allocation matrix. Given this virtual state $\ts$, we can rewrite Eq.~\eqref{eq:mat-dynamics} as
\begin{align*}
\vspace{-2 mm}
    \bP(\n'|s,\a) = \bP(\n'|\ns)\cdot \bP (\ns|s,\a).
\end{align*}
This enables us to decompose the system dynamics in Eq.~\eqref{eq:prob} into action-controlled (known), $\bP^k(\cdot|\cdot, \cdot)$, and stochastic (unknown), $\bP^u(\cdot|\cdot, \cdot)$, transitions. 
\begin{align}
\vspace{-2 mm}
    \bP(s'|s,\a) &= \underbrace{\bP(\tilde{\n}|s,\a)}_{\substack{{\small\mbox{~action~controlled}}\\\bP^k(\ts|s,\a)}} \cdot \underbrace{\bP(\n'|\tilde{\n}) \cdot \bP(f')}_{\substack{\small\mbox{exogenous}\\\bP^u(s'|\ts,\a)}}.\label{eq:pds-dynamics}
    \end{align}

The purely action controlled evolution of the allocation matrix is deterministic, i.e., for $\ts= (\ns, f)$,
\begin{align}\label{eq:deterministic}
\vspace{-2 mm}
\bP^k(\ts|s,\a) =\bP(\tilde{\n}|s,\a) = 1.
\end{align}

The uncertainty in system dynamics arises primarily from the stochastic arrival and departure processes, enabling a structured yet flexible modeling approach for learning-based control.
We can leverage these structural attributes to utilize the potential of post-decision state analysis to reduce complexity \cite{SBKB08, ZHD23}. 
In the next section, we present a reinforcement learning algorithm based on PDS to efficiently learn the optimal policy under partially known system dynamics.
%
\subsection{Proposed Algorithm: PDS-Based Value Iteration}
Let $s=(\n,f)$ be the state of the system in some time slot. Let $a \in \A(s)$ be the action chosen at state $s$ and $\a$ be the corresponding matrix. Then the post-decision state represented by $\ts$ is given by $\ts = (\ns,f) = (\n+\a,f)$. Let $\u$ be the $K\times M$ matrix of departures observed at the end of the time slot whose row vectors, $\u_k$, $k \in [K]$, are given by
\begin{align}
\vspace{-2 mm}
    \u_{k}=\begin{cases}
        \e_m,& \text{flow of type $m$ departs from UPF $k$}\\
        \zero, & \text{if no departures.}
    \end{cases}\label{line:um}
\end{align}
The next actual state, considering the stochastic arrivals and departures, is termed as the \textit{pre-decision state}, $s' = (\n',f')$, where $\n'= \ns-\u = \n+\a-\u$. The sequence of state transitions considering PDS is shown in Figure~\ref{fig:VI-seq}.

The Bellman equation for our MDP can be written as,
\begin{align}\label{eq:bellman}
\vspace{-2 mm}
    V(s) &= \min_{a \in \A(s)} \bigg \{ \xi(s) + \gamma \sum_{s'} \bP (s'|s,\a) \cdot V(s')\bigg \}.
\end{align}
By leveraging the PDS transition structure in Eqs.~\eqref{eq:pds-dynamics},~\eqref{eq:deterministic} and substituting for $s'$, we can rewrite Eq.~\eqref{eq:bellman} as
\begin{align*}
\vspace{-2 mm}
    V(s) = \min_{a \in \A(s)} \bigg \{ \xi(s) + \gamma \sum_{f',\u} \bP(\n'|\tilde{\n}) \cdot \bP(f') \cdot V(\ns-\u,f')\bigg \}.
\end{align*}
Now we define the post-decision state value function, $\tV:\S \rightarrow \bR$ as the expected value over all pre-decision states, $s'$, reachable from the post-decision state,
\begin{align}
\vspace{-2 mm}
    \tV(\ts) &= \bE_{s'}[V(s')] \nonumber\\
    \tV(\ts) &= \sum_{s'} \bP(\n'|\ns)\cdot \bP(f') \cdot V(s'). \label{eq:post-state}
\end{align}
With this definition, the Bellman equation becomes
\begin{align}
\vspace{-2 mm}
    V(s) = \min_{a \in \A(s)} \Big\{\xi(s) + \gamma \tV(\ts)\Big\}.\label{eq:bellman_2}
\end{align}
  \begin{figure}[t]
      \centering      \includegraphics[width=0.48\textwidth]{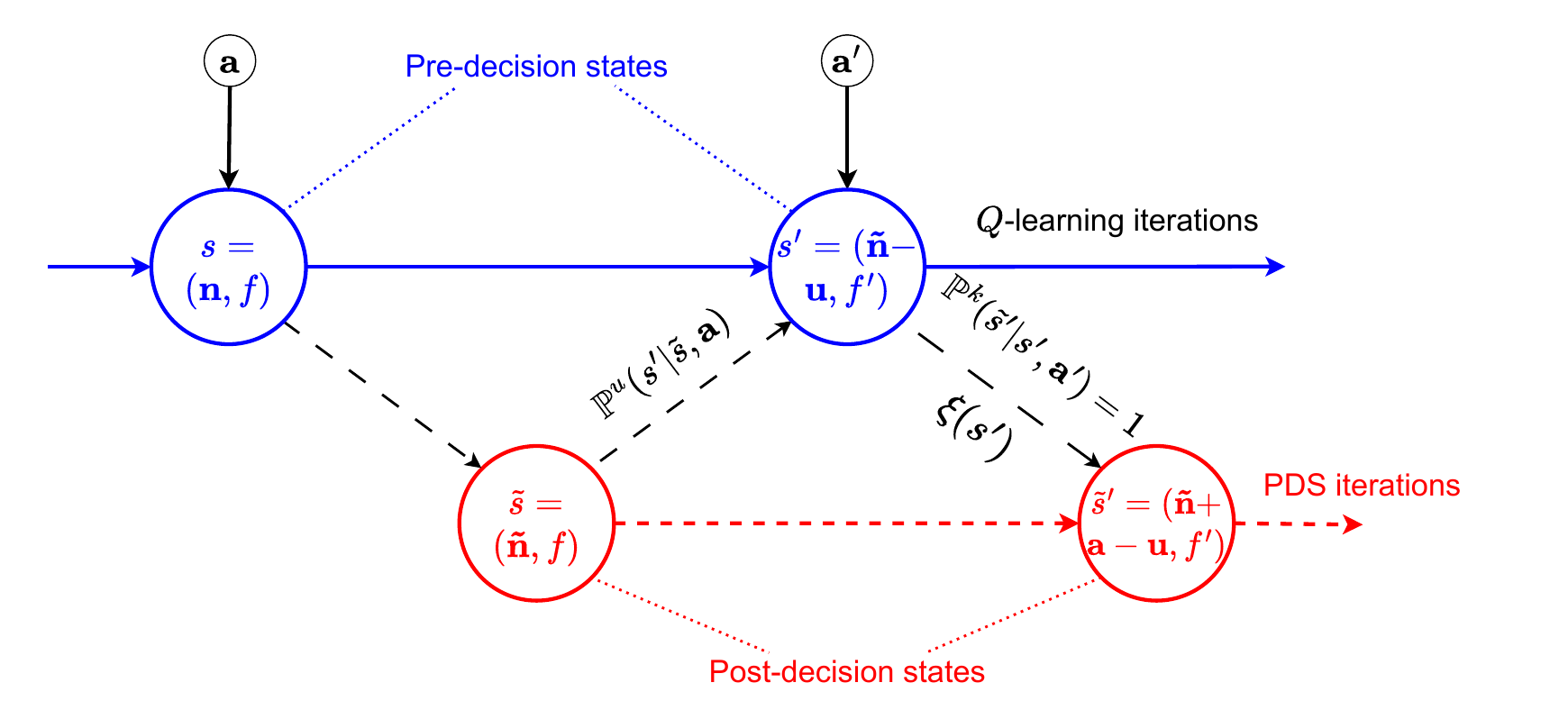}
      \caption{Illustration of PDS-based state transition.}
      \label{fig:VI-seq}
  \end{figure}
Let $a' \in \A(s')$ be the action chosen at state $s'$ and $\a'$ be the corresponding indicator matrix. Also, let the next post decision state after selecting $a'$ at $s'$ be $\ts' = (\n'+\a',f')$. 
We can write the value equation for state, $s'$ as 
 \begin{align*}
 \vspace{-2 mm}
     V(s') = \min_{a' \in \A(s')} \bigg \{ \xi(\n',f') + \gamma \tV(\n'+\a',f') \bigg \}.
 \end{align*}
substituting this in Eq.~\eqref{eq:post-state}, we obtain the equation that forms the basis for the proposed value iteration algorithm,
\begin{multline} \label{eq:2}
\vspace{-2 mm}
\tV(\ts) = \sum_{f',\u} \bP(\n'|\tilde{\n}) \cdot \bP(f') \min_{a' \in \A(s')} 
    \Big[ \xi(\ns-\u,f') +\\
     \gamma \tV(\ns-\u+\a',f')\Big].
\end{multline}

\begin{rem}
    Compared to the standard Bellman equation, using the structural property of the formulation, the \textit{expectation} is outside the \textit{minimization} in Eq.~\eqref{eq:2}. This enables us to propose a value learning using stochastic approximation.
\end{rem}
\begin{rem}
 Eq.~\eqref{eq:2} is the Bellman's equation for the PDS value function $\tV$. 
 By substituting the optimal value corresponding to the PDS analysis, $\tV^\star$, in Eq.\eqref{eq:bellman_2}, the optimal value vector of Eq.~\eqref{eq:bellman}, $V^\star$, can be obtained. This in turn, this also gives us the optimal policy $\pi^\star$.
\end{rem}
\begin{algorithm}[t]
\caption{Value iteration with post-decision state}
\label{Alg:Value}
\begin{algorithmic}[1]
\Require Number of UPFs $K$, Number of flow types $M$, Resource requirements $\baR_m$, Capacities $C_k$, Unit cost $c_k$
\Ensure Policy $\pi$
\State Initialize post-state $\ts = (\ns,f), \ts \in \S$, $\tV\leftarrow \zero$, and $t = 0$
\For {$t = 0,1,2,3,\ldots$}

    \State Observe departures and compute departure matrix $\u$ using Eq.~\eqref{line:um} \hfill //departures
    \State Observe the flow arrival $f'$ \hfill // arrival
    
    \State Compute $s' = (\ns - \u,f')$    
    \State Determine feasible actions, $\A(s')$
    \State Update $\tV_{t+1}(\ts)$ using Eq.~\eqref{eq:ivi} and obtain $\a'$ corresponding to the minimizing action $a' \in \A(s')$
    \State Compute $\ts' = (\ns - \u + \a', f')$
    \State Update $\pi(s') = a'$ and $\ts = \ts'$
    \EndFor
\end{algorithmic}
\end{algorithm}

Next, we describe how a learning algorithm, the PDS-based value iteration (PDS-VI), can be obtained from Eq.~\eqref{eq:2}. Let $\{\alpha_{t}\}_{t \geqslant 0}$ be a positive step-size sequence satisfying the Robbins-Monro conditions $\sum_t \alpha_{t} = \infty$ and $\sum_t (\alpha_{t})^2 < \infty.$
Using these step sizes, we perform stochastic approximation of value vector and update the estimate of each state at every time step depending on the observed arrivals and departures according to the following 
update rule
\begin{align} \label{eq:ivi}
\tV_{t+1}(\ts) =& \tV_{t}(\ts) + \alpha_{t} \Bigg[
    \min_{a' \in \A(s')} \Big( \xi(\ns-\u,f') + \nonumber\\
    &\gamma \tV_{t}(\ns-\u+\a',f')\Big)-\tV_{t}(\ts)\Bigg],\\
\tV_{t+1}(\ts'') =& \tV_{t}(\ts''), \qquad \mbox{~for~all~} \ts'' \neq \ts. \nonumber
\end{align}


The iterative algorithm to perform value estimation based on the given update is provided in Algorithm~\ref{Alg:Value}. Below we prove that these iterates converge to the optimal value of the PDS value vector, $\tV^\star$.
\begin{thrm}
    The PDS value function iterates in Eq.~\eqref{eq:ivi} converge to the optimal PDS value function, $\tV_{t} \rightarrow \tV^\star$.
\end{thrm}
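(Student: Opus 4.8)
The plan is to recognize the update in Eq.~\eqref{eq:ivi} as an asynchronous stochastic-approximation scheme whose associated mean-field map is the PDS Bellman operator, and then to combine a sup-norm contraction argument with a standard stochastic-approximation convergence theorem. First I would define the operator $\T$ acting on PDS value functions by letting $(\T\tV)(\ts)$ equal the right-hand side of Eq.~\eqref{eq:2}; by Remark~2 its fixed point is exactly $\tV^\star$. The initial step is to show $\T$ is a $\gamma$-contraction in the sup-norm. Taking two functions $\tV_1,\tV_2$ and using the elementary inequality $|\min_x g_1(x)-\min_x g_2(x)|\le \max_x|g_1(x)-g_2(x)|$ with $g_i(a')=\xi(\ns-\u,f')+\gamma\tV_i(\ns-\u+\a',f')$, the common cost term $\xi(\ns-\u,f')$ cancels in the difference $g_1-g_2=\gamma(\tV_1-\tV_2)$, and since $\bP(\n'|\ns)\bP(f')$ is a probability distribution summing to one, one obtains $\|\T\tV_1-\T\tV_2\|_\infty\le \gamma\|\tV_1-\tV_2\|_\infty$. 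As $\S$ is finite (noted after its definition) the costs $\xi$ are bounded on $\S$, so for $\gamma<1$ the Banach fixed-point theorem gives existence and uniqueness of $\tV^\star$.

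Next I would cast the iteration in standard form. Writing $\hat{\T}_t\tV_t(\ts)=\min_{a'\in\A(s')}\{\xi(\ns-\u,f')+\gamma\tV_t(\ns-\u+\a',f')\}$ for the sampled operator built from the observed departures $\u$ and arrival $f'$, Eq.~\eqref{eq:ivi} becomes $\tV_{t+1}(\ts)=(1-\alpha_t)\tV_t(\ts)+\alpha_t(\T\tV_t(\ts)+w_t)$ with noise $w_t=\hat{\T}_t\tV_t(\ts)-\T\tV_t(\ts)$. The decisive structural fact, highlighted in Remark~1, is that the expectation over $(\u,f')$ sits \emph{outside} the minimization in Eq.~\eqref{eq:2}; hence $\bE[\hat{\T}_t\tV_t(\ts)\mid \F_t]=\T\tV_t(\ts)$ and $w_t$ is a martingale difference, $\bE[w_t\mid\F_t]=0$. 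Because $\S$ is finite and the per-slot cost is bounded on $\S$, the quantity $\hat{\T}_t\tV_t$ is bounded by an affine function of $\|\tV_t\|_\infty$, so the conditional variance satisfies $\bE[w_t^2\mid\F_t]\le A+B\|\tV_t\|_\infty^2$ for suitable constants.

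Finally I would invoke an asynchronous stochastic-approximation theorem for contraction mappings (e.g.\ Tsitsiklis, or Bertsekas--Tsitsiklis, \emph{Neuro-Dynamic Programming}, Prop.~4.4): the contraction property, the zero-mean bounded-variance noise, and the Robbins--Monro step-size conditions $\sum_t\alpha_t=\infty$ and $\sum_t\alpha_t^2<\infty$ together yield $\tV_t\to\tV^\star$ almost surely. I expect the main obstacle to be the asynchronous structure of the update: Eq.~\eqref{eq:ivi} refreshes only the currently visited post-decision state $\ts$ while holding all others fixed, so the theorem requires that every $\ts\in\S$ be updated infinitely often. Establishing this needs a recurrence/exploration argument showing the visited PDS sequence returns to each state infinitely often under the algorithm's action selection; the finiteness of $\S$ makes this plausible, but it is the step demanding the most care and may require an explicit exploration assumption on the policy.
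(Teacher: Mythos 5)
Your proposal follows essentially the same route as the paper: both recognize Eq.~\eqref{eq:ivi} as a stochastic-approximation scheme whose mean map is the PDS Bellman operator $\T$ of Eq.~\eqref{eq:2}, both use the fact that the expectation sits outside the minimization to show the sampling noise is a martingale difference with zero conditional mean given $\F_t$, and both conclude via the $\gamma$-contraction of $\T$ in the sup-norm (the paper closes the argument with Borkar's ODE method, whereas you invoke the Tsitsiklis/Bertsekas--Tsitsiklis asynchronous contraction theorem --- an equivalent standard tool). Where you go beyond the paper is in actually supplying details it only asserts: you prove the contraction via $|\min_x g_1(x)-\min_x g_2(x)|\le\max_x|g_1(x)-g_2(x)|$ together with the cancellation of the common cost term and the fact that $\bP(\n'|\ns)\bP(f')$ sums to one, and you verify the bounded conditional variance of the noise using finiteness of $\S$. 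Most importantly, you correctly flag that the update is asynchronous --- only the currently visited post-decision state is refreshed --- so the convergence theorem additionally requires every $\ts\in\S$ to be updated infinitely often, which needs a recurrence or exploration argument (or an explicit assumption). The paper's proof silently treats the iteration as if it were the synchronous ODE $\dot{\tV}=\T(\tV)-\tV$ and never addresses this condition, so your identification of it as the step demanding the most care is a genuine improvement rather than a deviation.
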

\begin{proof}
Let $\T:\bR^{|\S|}\rightarrow\bR^{|\S|}$ be defined as,
\begin{multline}
\T(\tV)(\ts) = \sum_{f',\u} \bP(\n'|\tilde{\n}) \cdot \bP(f') \min_{a' \in \A(s')} 
    \Big[ \xi(\ns-\u,f') +\\
     \gamma \tV(\ns-\u+\a',f')\Big].
\end{multline}
Define $\widehat{\T}$ as the sampled operator for some observed $f'$ and $\u$ at time $t$, then 
\begin{align*}
    \widehat{\T}(\tV_{t})(\ts) = \min_{a' \in \A(s')} \Big[ \xi(\ns-\u,f') +   \gamma \tV_{t}(\ns-\u+\a', f') \Big].
\end{align*}
\vspace{-2 mm}
We can rewrite Eq.~\eqref{eq:ivi} in terms of $\T$ and $\widehat{\T}$ as
\begin{align*}
\tV_{t+1}(\ts) = \tV_{t}(\ts) + \alpha_{t} \left( \T(\tV_{t})(\ts) - \tV_{t}(\ts) + \epsilon_{t+1}(\ts) \right),
\end{align*}
where, $\epsilon_{t+1}(\ts) = \widehat{\T}(\tV_t)(\ts) - \T(\tV_t)(\ts)$. Rearranging,
\begin{align}
    \tV_{t+1}(\ts)-\tV_{t}(\ts)=\alpha_{t} \left(\T(\tV_{t})(\ts) - \tV_{t}(\ts) + \epsilon_{t+1}(\ts) \right).\hspace{-2 mm}\label{eq:sa_up_1}
\end{align}
Let $\F_t= \{(s_{t'},a_{t'},\xi_{t'},s'_{t'})\}_{0\leqslant t' \leqslant t}$ represent the information up to time $t$. It follows that $\{\epsilon_{t+1}(\ts)\}_{t \geqslant 0}$ forms a martingale difference sequence and $\bE[\epsilon_{t+1}(\ts)| \F_t] = 0$. The iterates in Eq.~\eqref{eq:sa_up_1} are equivalent to the discretized version of the ordinary differential equation (ODE),
   $\dot{\tV} = \T(\tV) - \tV.$
From \cite{borkar2000ode}, it can be argued that the convergence of these updates is equivalent to the convergence of the aforementioned ODE. Also, $\T$ is a $\gamma$-contraction under the sup-norm i.e.,
\begin{align*}
    ||\T(f) - \T(g)||_{\infty} \leqslant \gamma ||f-g||_{\infty}\quad \forall f,g.
\end{align*}
Thus it is guaranteed that the iterates $\tV_t$ converge to the optimal value $\tV^\star$.
\end{proof}
\begin{rem}
The upper bound for cardinality of $\S$ is given by $|\S|=(M+1)\cdot\prod_{k=1}^{K}(\lfloor C_k/\baR_1\rfloor)^M$ and this grows exponentially with $K$ and $M$. The PDS-VI saves both computations and memory: PDS-VI algorithm needs to store the value estimate at each state and hence has a complexity of $|\S|$ as opposed to Q-learning which requires $|\S| \times |\A|$ to store all state-action pairs. The PDS-VI and Q-learning both have computational complexity of $|\pazocal{A}|$ per iteration, however, the number of iterations for convergence in PDS-VI is in the order of $|\S|$ and in Q-learning is in the order of $|\S|\times|\A|$. 
\end{rem}
\begin{rem}
The application of PDS analysis in our proposed model accelerates convergence by leveraging structural properties beyond the standard PDS decomposition.
 In our model, the PDS value is independent of the current flow. Thus $|M|+1$ states can be simultaneously updated further improving convergence speed. 
\end{rem}
\begin{figure*}[t]
    \centering
    \begin{subfigure}[b]{0.32\textwidth}
        \centering
        \includegraphics[width=\textwidth]{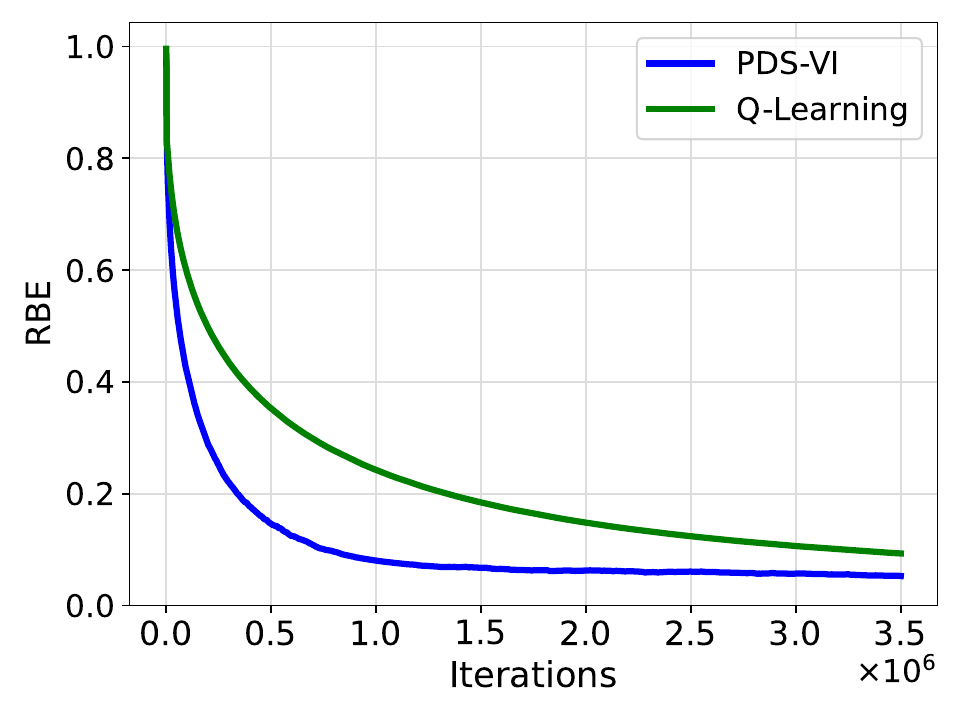}
        \vspace{-6 mm}
        \caption{}
        \label{fig:4upf_rbe_M}
    \end{subfigure}%
    \begin{subfigure}[b]{0.32\textwidth}
        \centering
        \includegraphics[width=\textwidth]{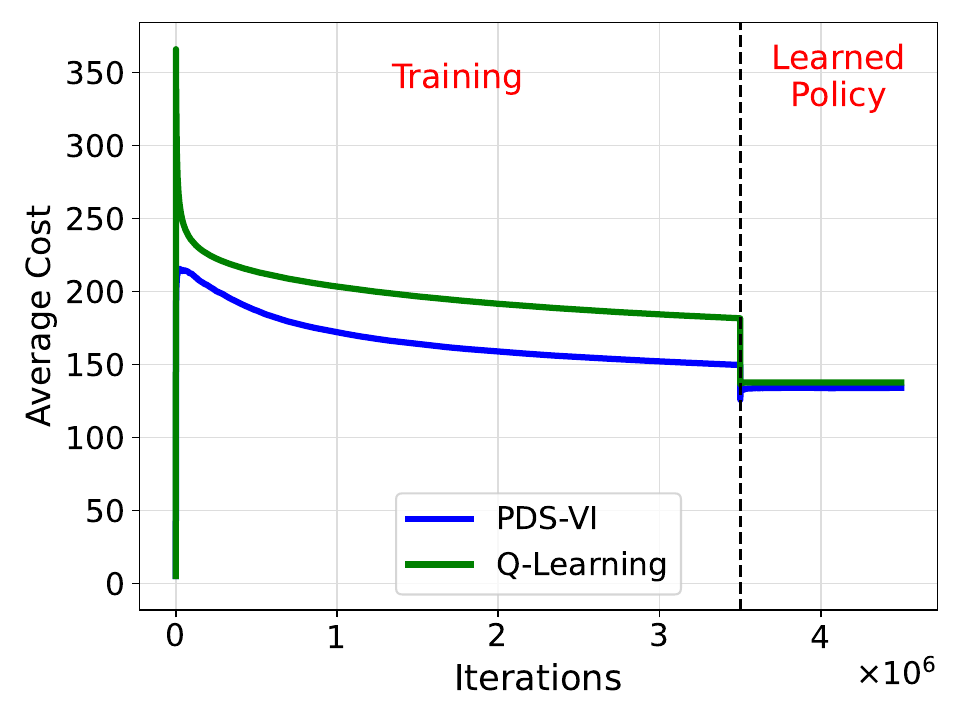}
        \vspace{-6 mm}
        \caption{}
        \label{fig:4upf_total_cost_M}
    \end{subfigure}%
    \begin{subfigure}[b]{0.32\textwidth}
        \centering
        \includegraphics[width=\textwidth]{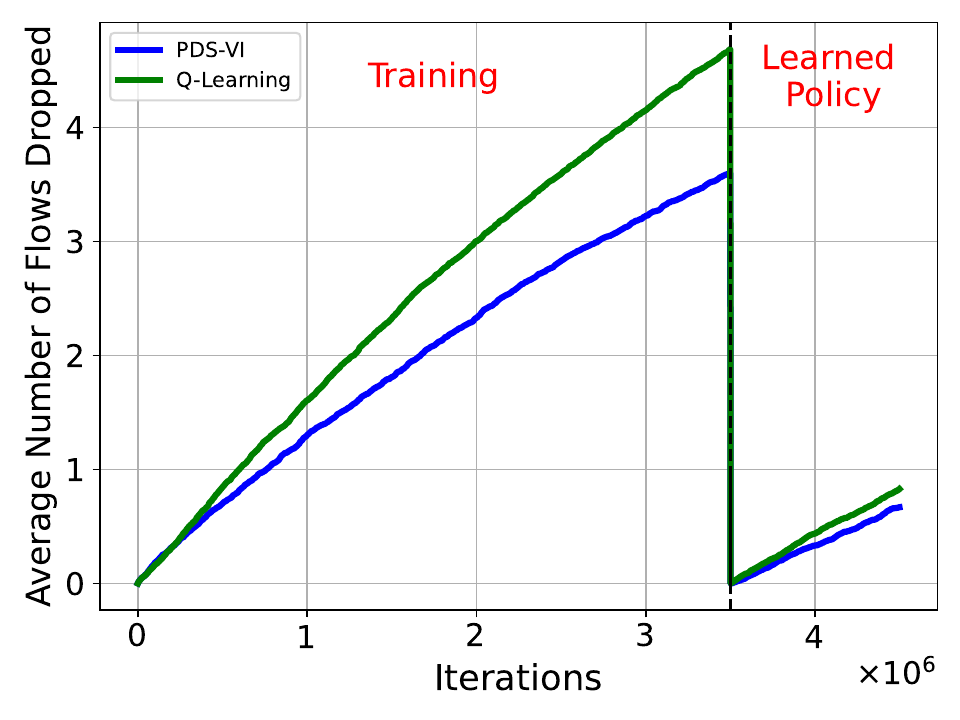}
        \vspace{-6 mm}
        \caption{}
        \label{fig:4upf_flow_drop_M}
    \end{subfigure}
    \vspace{-1 mm}
    \caption{PDS-based value iteration vs. Q-learning. Figure~\ref{fig:4upf_rbe_M} presents plots for average relative Bellman error, RBE, Figure~\ref{fig:4upf_total_cost_M} presents average cost, and Figure~\ref{fig:4upf_flow_drop_M} represents the number of flows blocked with respect to iteration for 5 UPF case.}
    \label{fig:results_plot_2}
\end{figure*}

\section{Numerical Simulations}
%
\noindent{\bf Data Generation:}
We evaluated the proposed algorithm on a synthetic dataset with $5$ UPFs and $2$ flow types. The flow arrivals are sampled from a Bernoulli distribution with probability $p = 0.7$, and the flow departures are dynamically sampled with $q_k = 0.3$, for all $k \in [K]$. Their probabilities are $b_1 = 0.6$ and $b_2=0.4$, respectively. The average rate requirement for each flow type is set as $\bar{R}_1 = 30$ and $\bar{R}_2 = 35$. The maximum data rate $C_k$ is set to $100$ for each UPF. We assume a discount factor of $0.96$. 
The total number of states is $98304$. The power costs are $c_1 = 5, c_2 =4, c_3 = 3, c_4 =2$, and $c_5 = 1$. The simulation is run over $4.5 \times 10^6$ slots with $3.5 \times 10^6$ for training and $1 \times 10^6$ for evaluation.

\noindent{\bf Baseline and Metrics:}
We performed value iteration (dynamic programming) to obtain the ground truth. 
The update equation for value iteration is 
%
    $V_{t+1}(s) = \xi(s) + \min_{\a \in \A} \gamma \sum_{s'} \bP(s'|s,\a) V_{t}(s)\ \ \forall s \in \S.$
The value update leverages knowledge of the transition dynamics and thus serves as the ground-truth baseline for evaluating the performance metrics.
%
We compared the algorithms based on the speed of convergence and the average reward. We consider the time average cost of the algorithm
\vspace{-2 mm}
\begin{align*}
    \bar{\xi}_{t} = \frac{1}{t}\sum_{t=1}^{N} \xi(s_{t}), 
\end{align*}
for cost performance and the relative Bellman error for convergence. Since states are visited at different frequencies, we opt for a weighted error, given by 
\[
    RBE = \frac{\sum_{s}w_{s}(|V(s) - V^{\star}(s)|)}{\sum_{s}w_{s}(|V^{\star}|(s))},
\]
where $w_s$ is number of times the state $s$ is visited and $V^{\star}$ is the ground truth. We also compared the number of flows dropped by each algorithm as an evaluation metric.

\noindent{\bf Results and Discussion:} From Q-learning, we obtain value function using $V(s) = \min_a Q[s,a]$. It serves as the model-free learning baseline.
Our objective is to demonstrate that the proposed PDS approach achieves performance comparable to Q-learning more efficiently, i.e., with fewer data samples and reduced computational time.
We bifurcate the cost evaluation into two sections; the cost incurred during training and a short evaluation of the learned policy post-training. We implemented the proposed algorithm using Python. The simulations were run on a Windows machine with \textit{Intel(R) Xeon(R) W-1370} processor. 
The simulation results in Figure~\ref{fig:results_plot_2} are averaged over 1000 independent Monte Carlo runs.

As shown in Figure~\ref{fig:4upf_rbe_M}, PDS-VI converges much faster, reaching $90\%$ accuracy within $7.5 \times 10^5$ iterations, whereas Q-learning takes about $3.5 \times 10^6$ iterations. 
As expected, due to faster convergence, PDS-VI achieves a lower cost 
during training compared to Q-learning as seen in Figure~\ref{fig:4upf_total_cost_M}. During policy evaluation using the learned policies, both methods perform similarly, with PDS-VI being slightly better consistently. Figure~\ref{fig:4upf_flow_drop_M} shows that PDS-VI blocks fewer flows both in training and evaluation compared to Q-learning. Thus, PDS-VI clearly outperforms Q-learning by converging faster, reducing training cost, and admitting more flows.
%
\begin{rem}
    The Protocol Data Unit (PDU) session anchor UPF, defined in 3GPP~TS~23.501, maintains the PDU session context providing session continuity under mobility while other UPFs serving the flow might change. The Session Management Function (SMF) selects this anchor UPF based on several factors including geographic service area constraints. The UPFs serving the same network are typically deployed in small geographically localized clusters. Hence, the effective selection of anchor UPFs occurs among a small subset of $K$ UPFs, where $K$ is often a small number.
\end{rem}

\vspace{-2 mm}
\section{Conclusion}
In this letter, we studied the delay and energy-aware flow allocation problem in wireless systems under resource constraints. We formulated the problem as an MDP and proposed a model-free RL algorithm based on post-decision state (PDS) learning. By leveraging the decomposable structure of the system dynamics, i.e., separating the controllable and exogenous factors, our approach enables efficient learning with faster convergence. We proved the convergence of the proposed algorithm and evaluated its performance against the standard Q-learning algorithm, validating its effectiveness.

\vspace{-1 mm}
\bibliographystyle{IEEEtran}
\bibliography{main.bbl}
\end{document}